\documentclass[11 pt]{article}
\oddsidemargin=0.in \evensidemargin=0.in \topmargin=-.5in
\headsep=0.2in \textwidth=6.5in \textheight=9.in

\date{}

\author{Javad Heydari \and Ali Tajer\thanks{Authors are with the Electrical, Computer, and System Engineering Department, Rensselaer Polytechnic Institute, Troy, NY 12180.}}
%%%%%%%%%%%%%%%%%%%%%%%  Packages  %%%%%%%%%%%%%%%%%%
\usepackage{graphicx,subfigure}
\usepackage{amsmath,amssymb}
\usepackage[export]{adjustbox}
\usepackage{cite}
\usepackage{url}
\usepackage{booktabs}
\usepackage{hhline}
\usepackage{dsfont}
\usepackage{bm}
\usepackage {tikz}
\usetikzlibrary {positioning}
\usepackage{epstopdf}
\usepackage{ifpdf}
\usepackage{array}
\usepackage{amsthm}
\usepackage[font=footnotesize,singlelinecheck=false]{caption}
\usepackage{hyperref}

%%%%%%%%%%%%%%%%%%%%%%%  Environments  %%%%%%%%%%%%%%%%%%
\newtheorem{theorem}{Theorem}

\newtheorem{definition}{Definition}

%%%%%%%%%%%%%%%%%%%%%%%  Commands  %%%%%%%%%%%%%%%%%%

% SF
\def \sH{{\sf H}}
\def \T{{\sf T}}

\def \H{{\sf H}}

% Shorthands

\newcommand{\dff}{\stackrel{\scriptscriptstyle\triangle}{=}}
\newcommand{\bbe}{\mathbb{E}}

\def \bX{\boldsymbol{{ X}}}
\def \bM{\boldsymbol{{ M}}}
\def \blm{\boldsymbol{{ m}}}
\def \bH{\boldsymbol{{ H}}}

\def \bU{\boldsymbol{{ U}}}
\def \bV{\boldsymbol{{ V}}}
\def \bA{\boldsymbol{{ A}}}
\def \bI{\boldsymbol{{ I}}}

\def \by{\boldsymbol{{ y}}}

\def \bp{\boldsymbol{{ p}}}
\def \br{\boldsymbol{{ r}}}
\def \ba{\boldsymbol{{ a}}}
\def \bB{\boldsymbol{{ B}}}

\def \bn{\boldsymbol{{ n}}}
\def \bs{\boldsymbol{{ s}}}
\def \bt{\boldsymbol{{\theta}}}

% Caligraphic
\def \A{{\cal A}}

\def \G{{\cal G}}
\def \L{{\cal L}}
\def \B{{\cal B}}

\def \N{{\cal N}}

\def \E {{\cal E}}

\def \cT {{\cal T}}

\def \R {{\cal R}}

\DeclareMathOperator*{\argmax}{arg\,max}
\DeclareMathOperator*{\argmin}{arg\,min}

\begin{document}

\title{Quickest Localization of Anomalies in Power Grids: \\ A Stochastic Graphical Framework}

\maketitle

\begin{abstract}

Agile localization of anomalous events plays a pivotal role in enhancing the overall reliability of the grid and avoiding cascading failures. This is especially of paramount significance in the large-scale grids due to their geographical expansions and the large volume of data generated. This paper proposes a stochastic graphical framework, by leveraging which it aims to localize the anomalies with the minimum amount of data. This framework capitalizes on the strong correlation structures observed among the measurements collected from different buses. The proposed approach, at its core, collects the measurements sequentially and progressively updates its decision about the location of the anomaly. The process resumes until the location of the anomaly can be identified with desired reliability. We provide a general theory for the quickest anomaly localization and also investigate its application for quickest line outage localization. Simulations in the IEEE 118-bus model are provided to establish the gains of the proposed approach.

\end{abstract}

%\begin{IEEEkeywords}
%Anomaly detection, localization, quickest detection, stochastic graphs.
%\end{IEEEkeywords}

\vspace{-0.1 in}
\section{Introduction}
\label{sec:intro}

%{\renewcommand{\thefootnote}{}\footnotetext{\noindent This research was supported in part by the U. S. National Science Foundation under Grant ECCS-1455228 and the CAREER Award ECCS-1554482.}}

Due to the large-scale and strong inter-connectivities in the power grid, any fault or failure can transcend its  realm and disrupt operations in other parts of the grid as well. This can potentially cause disruption of service and destabilize grid operations. Therefore, real-time monitoring of a grid, consisting of generators, transmission lines, and transformers, is of paramount importance in securing reliable power delivery. Specifically, agile detection and localization of system failures facilitate mitigating the disruptive impacts the failure can cause to the network, and prevent anomalous events that can lead to failures in larger scales. The introduction of advanced measurement devices such as phasor measurement units (PMUs) has enabled collecting real-time synchronized data from the entire network, which allows the operators to dynamically observe the status of the system and detect and even localize potential failures.

Transmission lines are constantly exposed to various kinds of disturbances such as equipment malfunctioning and natural disasters. While the power system is designed to operate under single or multiple contingencies, the monitoring task should identify those contingencies quickly to prevent overload in one section of the grid which may lead to the cascade of events and a major blackout. Detecting such contingencies and anomalies when they occur, and localizing them accurately can expedite the repair of the faulty components, speed up restoration of the grid, reduce outage time, and improve power system reliability~\cite{Liao}. Hence, anomaly detection and localization have been investigated extensively in the existing literature under different settings and objectives. Detection of anomalies, identifying their location, and specifying the type of the anomalous events are the main objectives of fault analysis in power grids. In this paper we develop a stochastic graphical framework for modeling the bus measurements, and devise data-adaptive data-acquisition and decision-making processes for reliably detecting and localizing the anomalous events with the fewest number of measurements.  This is motivated by lowering the required communication and reducing the computational complexity  and delay of decision-making.

Analyzing anomalous events can be categorized into two broad classes according to the type of information used. In one direction, detecting and localizing events are based on the changes in the impedance of the corresponding transmission lines which are leveraged to detect and localize the event by evaluating voltage and current measurements. The available data in this method plays a critical role in the complexity and detection accuracy. Local approaches, according to the number of terminals from which measurements can be taken, are categorized into single-end~\cite{Takagi,Eriksson,Kawady,Pereira}; double-end~\cite{Kezunovic,Dalcastagne,Liao2009,Izykowski,Apostolopoulos}; and multi-end algorithms~\cite{Nagasawa,Tziouvaras,Manassero,Funabashi,Brahma}. The systems that use values measured in both line terminals give more exact results than those that only use values measured in one terminal. Nevertheless, in double-end approaches, measurements require synchronization which makes data acquisition more complex.

In another direction anomalous events are studied based on the high frequency contents of the signal propagated in the network under an event~\cite{McLaren,Ibe,Magnago,Evrenosoglu,Spoor,Jafarian,Korkali}. In these approaches, signature waves are sent along the transmission lines, and the traveling durations are determined by leveraging the correlation between forward and backward waves. Such time durations explicitly determine the distance from that terminal to the anomalous point. This class of localization techniques are insensitive to fault type, fault resistance, and source parameters of the system, and are independent of the equipments installed in the network. For the arrival times, feature extraction techniques such as wavelet transform are leveraged to distinguish between the normal signal and the one containing high frequency components. Feature extraction techniques combined with classification methods such as neural networks can also be used for anomaly detection~\cite{vasilic2001new}, their classification~\cite{Kashyap},~\cite{Lin2001}, and localization~\cite{Jiang2012,Silva,Gracia}.

All the aforementioned studies utilize a static monitoring mechanism for analyzing the events, i.e., pre-specified locations of the grid are monitored continuously. It implies that a sufficiently large number of measurements is required to ensure reliable detection and localization of anomalies. Despite the effectiveness, such approaches can become inefficient in large-scale networks that are expanded over a large geographic area, due to the costs associated with collecting and processing large volumes of data. To circumvent this issue, a stochastic graphical framework is developed in this paper to model the measurements collected from the grid. Generated measurements at different buses around the network follow a certain correlation structure which depends on the topology of the network and the status of different transmission lines. Under an anomalous event, this correlation model changes to the one that reflects the location of the event. This framework is leveraged to minimize the number of measurements required to ensure that all the events can be localized with a target reliability through designing a coupled data-acquisition and decision-making process. This leads to minimizing the amount of data required for localizing the fault. Specifically, it develops a stochastic graphical model in which the connectivities in the graph are modeled based on the grid parameters and capture the correlation among the measurements reported by neighboring buses. By properly leveraging such correlation, the quality of the information provided by different measurement units are quantified. This enables devising a data-adaptive information-gathering process, which can dynamically form an estimate about the location of the potential event and, accordingly, measure the buses that are most informative about the anomaly.

This paper designs a {\em quickest} coupled data-acquisition and decision-making strategy for detecting and localizing anomalous events in transmission lines. The purpose of such a strategy is to detect and localize the anomalies with the minimum number of measurements, while satisfying a target reliability for the decisions. In contrast to the non-adaptive strategies, which collect the data according to a pre-specified rule and in one shot, the proposed adaptive approach gradually and progressively focuses its sampling resources on the areas in the network which are most likely to contain the anomalous line(s). Specifically, this approach starts by taking rough measurements from potentially anywhere in the network, and based on the collected measurements, dynamically and over time it eliminates the regions considered to behave normally, and further scrutinizes those that are stronger candidates for behaving anomalously. Designing such strategies involves balancing a tension between the accuracy and agility of the decision, as two opposing performance measures. Specifically, achieving a higher quality in decision necessitates collecting more data, which in turn penalizes the delay of the process. This data-acquisition and decision-making strategy involves making dynamic decisions at each time about 1) what set of measurement units to be measures, and 2) whether a reliable decision can be formed based on the collected data, or more measurements are still needed. Under each anomaly, it is assumed that the network remains connected and it settles down to a steady-state quickly. Furthermore, the collected data are assumed to bear no measurement noise or data injection attacks. We first review the preliminaries on the stochastic graphical model and anomaly detection in Section~\ref{sec:pre}. Quickest anomaly detection and localization is formalized in Section~\ref{sec:problem}. In Section~\ref{sec:sol} we present the theory for characterizing the data acquisition and decision-making processes as well as the general framework for anomaly detection, and the associated algorithms for implementing the optimal decision rules. Finally, in Section~\ref{sec:sim} we apply the designed algorithm to the problem of line outage detection and localization as a special anomaly detection problem. It is noteworthy that the line outage detection in power grids is investigated extensively in the existing literature. When a transmission line is in outage, it is assumed that the tripping log of its associated relay is not available or accessible. Hence, the localization of the outage should be performed based on the phasor measurements from different buses. When all the measurements are available, exhaustive search for detecting single line outage events is studied in~\cite{Abur} and~\cite{Tate1}, and computing line outage distribution factors for detecting multiple line outages is studied in~\cite{LODF1} and~\cite{LODF2}. In~\cite{Veeravalli}, a quickest change point detection approach is deployed that monitors the network sequentially in order to detect a persistent outage and identify its location. Joint outage detection and state estimation are considered under the Bayesian setting in~\cite{Zhao14}. In~\cite{He:TSG11}, measurements are modeled as a Gauss-Markov random field (GMRF) and outage detection is performed by approximating the covariance matrix of the measurements. The study in~\cite{Zhu:PS12} formulates outage detection as a sparse signal recovery problem and applies compressive sensing tools for outage detection. All these studies utilize the measurements from all the buses and their performance degrades significantly when a subset of measurements are available. To address this issue, the study in~\cite{Wu} develops an algorithm based on the ambiguity group theory for localizing the outage event. Another approach is to estimate the unobserved PMU data prior to performing detection~\cite{Maymon}. The optimal static PMU selection for minimizing the error probability in outage detection over all possible outage events is studied in~\cite{Zhao:PES12,Zhao:PES13,Wu15,Kim}.

\vspace{-0.05 in}
\section{Preliminaries}
\label{sec:pre}

\subsection{Background on Markov Random Fields}

A Markov random field (MRF) is a graphical model that encodes certain dependency structures among a collection of random variables. Given an undirected graph $\mathcal{G}=(\mathcal{B},\mathcal{E})$ with $N$ nodes $\mathcal{B}\dff\{1,2,\dots,N\}$, the set of random variables $\boldsymbol{\theta}\dff\{\theta_1,\dots,\theta_N\}$ form a Markov random field with respect to $\G$ if they satisfy the global Markov property. To formalize this property, for any given set $A\subseteq \{1,\dots,N\}$ we define $\theta_A\dff \{\theta_i\;:\; i\in A\}$. We also say that set $C$ separates disjoint sets $A$ and $B$ if any path starting in $A$ and terminating in $B$ has at least one node in $C$.
\begin{definition}[Global Markov property]
The set of random variables $\boldsymbol{\theta}\dff\{\theta_1,\dots,\theta_N\}$ satisfies the global Markov property associated with graph $\mathcal{G}=(\mathcal{B},\mathcal{E})$ if and only if for any two disjoint subsets $A,B\subseteq\mathcal{B}$ and a separating subset $C\subseteq\B$, random  variables $\theta_A$ and $\theta_B$ are conditionally independent given $\theta_C$, i.e.,
\begin{align}\label{eq:global}
\mathbb{P}(\theta_A \;|\; \theta_B , \theta_C)=\mathbb{P}(\theta_A \;|\; \theta_C) \ .
\end{align}
\end{definition}
\noindent Random variables satisfying the global Markov property also satisfy the following weaker Markov property.
\begin{definition}[Local Markov property] A random variable is conditionally independent of all other random variables, given its neighbors, i.e., 
\begin{align}\label{eq:local}
\mathbb{P}(\theta_u \;|\; \theta_v , \theta_{\N_u})=\mathbb{P}(\theta_u \;|\; \theta_{N_u})  \quad \forall v\notin (\N_u\cup{u}) \ ,
\end{align}
where $\N_u$ denotes the set of neighbors of $u$, i.e., 
\begin{align*}
\N_u\dff \{v\in\B\;:\;(u,v)\in\E \}\ .
\end{align*}
\end{definition}

\subsection{Statistical Model of Bus Measurements}

Studies in \cite{He:TSG11}, \cite{Sedghi14}, and \cite{Sedghi15} show that the statistical relationship among the measurements collected from different buses across the grid can be modeled effectively by a GMRF. Based on this model, grid topology determines the graph underlying the GMRF, such that the buses correspond to the vertices of the graph and the lines constitute the edges. This model relies on the observation that the second-neighbor correlations are dominated by those of the immediate neighbors \cite{Sedghi14}.

To formalize this connection, consider a power grid consisting of $N$ buses, abstracted by graph $\G=(\B,\E)$, where $\B\dff\{1,\dots,N\}$ denotes the set of buses and $\E\subseteq\B\times\B$ represents their connectivities such that $(i,j)\in\E$ if buses $i,j\in\B$ are directly connected by a line. We define $\theta_i$ and $p_i$ as the voltage phasor angle and the injected active power at bus $i\in\B$. By defining $x_{ij}$ as the reactance of the line connecting buses $i$ and $j$, from the DC power flow model we have \cite{Abur:book}:
\begin{align}\label{eq:linear}
p_i=\sum_{j\in\N_i}\Big(\frac{\theta_i-\theta_j}{x_{ij}}\Big)\ .
\end{align}
Hence, defining $\bp\dff[p_1,\dots,p_N]^T$ and $\bt\dff[\theta_1,\dots,\theta_N]^T$ provides
\begin{align}\label{eq:matrix}
\bp=\bH\cdot\bt \ ,
\end{align}
where $\bH\in\mathbb{R}^{N\times N}$ is the weighted Laplacian matrix of the connectivity graph defined as
\begin{equation}\label{eq:H}
\bH[ij]=\left\{\begin{array}{ll}
\vspace{1mm}
\sum_{(i,\ell)\in\E}\frac{1}{x_{i\ell}} & \text{if } i=j \\ 
\vspace{1mm}
-\frac{1}{x_{ij}} & \text{if } (i,j)\in\E \\ 
0 & \text{Otherwise}
\end{array}\right. .
\end{equation} 
Furthermore, from \eqref{eq:linear} it follows that $\theta_i$ can be represented as
\begin{align}\label{eq:phasor}
\theta_i=\sum_{j\in\N_i}r_{ij}\theta_j+\beta_i p_i \ ,
\end{align}
where we have defined
\begin{align}\label{eq:r}
\beta_i  \;\dff\; \bigg(\sum_{(i,j)\in\E} \frac{1}{x_{ij}}\bigg)^{-1} \ ,\quad\text{and }\quad r_{ij} & \;\dff\; \frac{\beta_i}{x_{ij}} \ .
\end{align}
Equation~\eqref{eq:phasor} indicates that $\theta_i$  depends only on the voltage phasor angles at its immediate neighbors. By accounting for the random disturbances in the system as well as the uncertainties associated with load profiles, the aggregate injected power at different buses can be modeled as independent random variables~\cite{Dopazo} and~\cite{Schellenberg}. This assumption in conjunction with~\eqref{eq:phasor} shows that the set of voltage phasor angles $\{\theta_i:\ i\in\B\}$ satisfy the local Markov property.

We note that, by construction, matrix $\bH$ is rank-deficient, which causes ambiguity for the solution of $\bt$ in \eqref{eq:matrix}. To fix this ambiguity, one bus is selected as reference with its phasor angle set to zero and the phasor angles of all other buses denote their differences relative to the reference bus. By removing the row and column corresponding to the reference bus, the remaining $(N-1)\times(N-1)$ matrix ￼$\bH$ has full rank. In the remainder of this paper, when referring to the Laplacian matrix of the network, we always mean the modified full-rank one.

\subsection{Anomalous Events}

\begin{figure}
\centering
\includegraphics[width=2.5in]{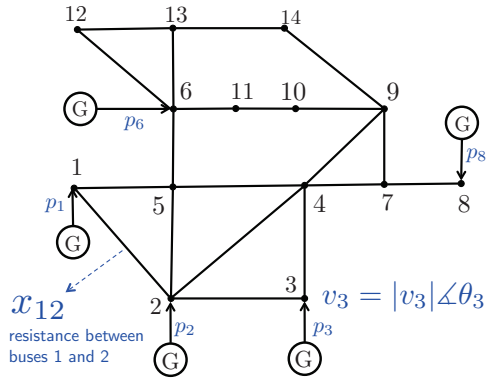}
\renewcommand{\figurename}{Fig.}
\caption{IEEE $14$-bus standard system with $20$ transmission lines and $5$ generators. The transmission line between bus $(i,j)$ has reactance $x_{ij}$.}
\label{fig:model}
\end{figure}

The grid consists of $L\dff|\E|$ transmission lines, where the set of lines is denoted by $\L\dff\{1,\dots,L\}$. We are interested in detecting and localizing anomalous events in the transmission lines. Any change in the reactance of transmission lines that does not conform with the expected patterns is considered to be an anomalous event and, there exist $(2^L-1)$ such possible events, each one corresponding to one combination of lines experiencing anomaly. These events, in practice, occur with different frequencies, and have different disruptive effects on grid operations. Furthermore, we only consider the events that keep the underlying post-event graph connected. This precludes considerable changes between pre-event and post-event bus power injections. We define $M$ as the number of events that represent the most critical ones, which should be localized in the quickest fashion. Accordingly, we define $\R=\{R_1,\dots,R_M\}$ as the set of such events, where $R_k\subseteq\L$ contains the indices of the lines experiencing anomaly under event $k\in\{1,\dots,M\}$. Additionally, event $R_0$ is reserved to signify the event under which all lines are normal. Dynamically determining the state of the grid, and localizing the anomaly, when the grid is deemed to be anomalous, can be abstracted as dynamically deciding which event $R_k\in\R$ represents the model of the grid. By denoting the true event by $\T\in\R$, detecting and localizing anomalies can be cast as the following multi-hypothesis testing problem:
\begin{align}\label{eq:problem1}
\sH_k:\quad \T=R_k\ , \quad \mbox{for}\ k\in\{0,\dots,M\}\ .
\end{align}
When an anomaly occurs, network connectivity profile changes. We denote the connectivity graph of the grid and the reactance of the line connecting buses $i$ and $j$ under event $R_k$ by $\G_k(\B,\E_k)$ and $x^k_{ij}$, respectively, corresponding to which for $k\in\{0,\dots,M\}$ we define matrix $\bH_k$ such that
\begin{equation}\label{eq:H_k}
\bH_k[ij]\dff\left\{\begin{array}{ll}
\vspace{1mm}
\sum_{(i,\ell)\in\E_k}\frac{1}{{x}^k_{i\ell}} & \text{if } i=j \\ 
\vspace{1mm}
-\frac{1}{x^k_{ij}} & \text{if } (i,j)\in\E_k \\ 
0 & \text{Otherwise}
\end{array}\right. .
\end{equation} 
Hence, the multi-hypothesis model in~\eqref{eq:problem1} can be expressed as
\begin{align}\label{eq:problem2}
\sH_k:\quad \bt=\bB_k\cdot\bp\ , \quad\mbox{for}\ k\in\{0,\dots,M\}\ ,
\end{align}
where we have defined $\bB_k\dff\bH_k^{-1}$. Under each anomalous event, $\bt$ follows a distinct correlation structure governed by the associated topology and line reactances of the network, which is imposed through matrix $\bB_k$. Due to the massive scale of power networks, collecting measurements from all the buses incurs prohibitive sensing and processing costs. Hence, we devise a data-adaptive decision-making framework that can form arbitrarily reliable decisions about the state of the grid with {\em minimal} number of measurements.

%\vspace{-0.1 in}
\section{Quickest Localization of Anomalies}
%\vspace{-0.05 in}
\label{sec:problem}

In this section, we formalize a sequential data-acquisition and decision-making process to collect measurements of voltage phasor angles and use these measurements to localize the anomalies, when one is deemed to exist, with the {\em fewest} number of measurements. This is motivated by reducing the costs associated with data-acquisition, communication, and processing, especially in large-scale grids.
This data collection and decision-making mechanism is constructed based on the premise that when a specific anomaly occurs, it affects the measurements from different buses with varying degrees. For instance, when the line connecting bus $i$ and $j$ is in outage, its effects on the measurements generated at buses $i$ and $j$ can be more than those of a remote bus. By capitalizing on such discrepancies among the level of information provided by different buses, the proposed sampling procedure progressively forms a decision about the likely events, and takes measurements from the buses that are expected to be more informative about these events.

The data-acquisition process sequentially collects $\ell$ measurements at-a-time from $\ell$ different buses. The process continues until time $\tau\in\mathbb{N}$, as the stopping time of the process, at which point it terminates and a decision about the underlying event is formed. For modeling the dynamic decisions about the buses to be observed at time $t$ we define the selection function $\psi(t)\dff[\psi(t,1),\dots,\psi(t,\ell)]$, which captures the indices of $\ell$ buses to be measured at time $t\in\{1,\dots,\tau\}$. We denote the vector of measurements collected at time $t$ by $\bt(t)\dff[\theta(t,1),\dots,\theta(t,\ell)]$ where $\theta(t,i)$ is the measurement collected from bus $\psi(t,i)$. Accordingly, we denote the vector of observed buses and their corresponding measurements up to time $t$ by $\psi_t$ and $\bt_t$, respectively, i.e.,
\begin{align}
\psi_t &\dff [\psi(1),\dots,\psi(t)]^T\ \;\;
\mbox{and} \;\; \bt_t &\dff [\bt(1),\dots,\bt(t)]^T\ .
\end{align}
Finally, we define $\delta\in\R$ as the decision rule at the stopping time. The quality of decision at the stopping time is captured by the decision error probability, i.e.,
\begin{align}
\mathsf{P_e}&=\mathbb{P}(\delta\neq \T)=\sum_{i=0}^M\mathbb{P}(\T=i)\sum_{j\neq i}\mathbb{P}(\delta=j \;|\; \T=i)\ .
\end{align} 
Hence, the optimal sampling strategy, which aims to form the quickest decision subject to maintaining a target decision quality is obtained as the solution to the following optimization problem.
%\begin{equation}
%\begin{aligned}
%\begin{array}{ll}
%\underset{\tau\, ,\, \delta \, ,\, \psi_\tau}{\text{minimize}}  &\mathbb{E}\{\tau\} \; \\
%\text{subject to } & \mathsf{P_e}\leq\beta
%\end{array}
%\end{aligned} \ ,
%\end{equation}
\begin{align}\label{eq:Opt2}
\underset{\tau\, ,\, \delta \, ,\, \psi_\tau}{\text{minimize}}  \;\;\mathbb{E}\{\tau\} \; 
\;\;\;\text{subject to } \;\;\; \mathsf{P_e}\leq\beta
\end{align}
where $\beta\in(0,1)$ controls the probability of erroneous decisions.

%\vspace{-0.1 in}
\section{Optimal Decision Rules}
%\vspace{-0.05 in}
\label{sec:sol}

The optimal sampling strategy involves dynamically selecting the buses to be monitored and deciding the time to stop the process. In the next subsections we characterize the bus selection function $\psi(t)$, the stopping time $\tau$, and the final decision rule $\delta$. We remark that these rules, collectively, satisfy asymptotic optimality guarantees and solve the quickest detection problem of interest formalized in \eqref{eq:Opt2}. In the sequel, we assume that the total number of lines that can be concurrently anomalous is upper bounded by $\eta_{\rm max}$.

\vspace{-0.1 in}
\subsection{Bus Selection Rule}

\subsubsection{Analysis}

The measurements from different buses are not equally informative about different events. Hence, dynamically selecting buses based on real-time data for measuring their voltage phasor angles has a critical role in striking an optimal balance between the decision quality and the quickness of the process, as formalized in \eqref{eq:Opt2}. In order to characterize the optimal bus selection rule $\psi(t)$, we start by establishing the relevant theoretical foundations, and then we provide the specific designs for $\psi(t)$. Solving the problem in \eqref{eq:Opt2} can be facilitated  by using the techniques in controlled sensing, and specifically the Chernoff rule~\cite{Chernoff1959}. According to the Chernoff rule, at each time $t$ we first make a maximum likelihood decision about the true model $R_k$ based on which we select the bus that reinforces this decision to be measured at time $(t+1)$. The information of each observation is quantified in terms of the Kullback-Leibler (KL) divergence between the distributions under various hypotheses. The main advantage of the Chernoff rule is low computational complexity. The main weakness, on the other hand, is that it can be suboptimal as it decouples the impact of the decisions made at each time on the future decisions. To circumvent this deficiency, we propose a new selection rule to incorporate the effect of each action on the future ones. This new decision rule, in general, involves an {\em exhaustive} search over all buses and can have prohibitive complexity, especially as the grid size grows. Nevertheless, we show that by properly leveraging the Markov structure, the computational complexity can be reduced significantly and it becomes as simple as that of the Chernoff rule. In order to prove these properties, we first focus on a binary setting, i.e., $\R=\{R_0,R_1\}$ and consider taking one sample at-a-time, i.e., $\ell=1$. Under the normal event $R_0$, we assume that the measurements form a GMRF with mean $\bar\bt$, which represents the empirical average of $\bt$ based on the historical data, and covariance matrix $(\bI-\bm{Q}_0)$, where the elements of $\bm{Q}_0=[r_{ij}]$ are defined in \eqref{eq:r}. Under the anomalous event $R_1$, on the other hand, we assume that the measurements form an alternative GMRF with a different covariance matrix $\bm{Q}_1$. For the simplicity in notations we assume $\bm{Q}_1=\bI$.  We define set $\mathcal{S}_t^i$ as a subset of unobserved buses prior to time $t$ that contain bus $i$, i.e.,
\begin{align}
\mathcal{S}_t^i\subseteq\B\setminus\psi_{t-1} \quad\text{and}\quad i\in\mathcal{S}_t^i \ .
\end{align}
Furthermore, at time $t$, and corresponding to each valid set $\mathcal{S}_t^i$ we assign the following two metrics to each bus $i$:
\begin{align}\label{eq:metric:GMRF1}
M_i^0(t,\mathcal{S}_t^i) \;=\; & \frac{1}{2} \sum_{j\in\psi_{t-1}} \log\frac{1}{1-r^2_{ij}}+r^2_{ij}(\Delta\theta_j^2-1) \nonumber\\
+\; & \frac{1}{2|\mathcal{S}_t^i|}\sum_{j\in\mathcal{S}_t^i}\log\frac{1}{1-r_{ij}^2}\ , \\
\label{eq:metric:GMRF2}
\text{and}\ \ M_i^1(t,\mathcal{S}_t^i) \;=\; & \frac{1}{2} \sum_{j\in\psi_{t-1}} \log(1-r^2_{ij})+\frac{r^2_{ij}(\Delta\theta_j^2+1)}{1-r^2_{ij}} \nonumber\\ 
+\; & \frac{1}{2|\mathcal{S}_t^i|}\sum_{j\in\mathcal{S}_t^i}\log(1-r_{ij}^2)+\frac{2r_{ij}^2}{1-r_{ij}^2}\ ,
\end{align}
where we have defined $\Delta\theta_i\dff\theta_i-\bar\theta_i$ for $i\in\{1,\dots,N\}$.
Based on these definitions, when the maximum likelihood decision about the true model at time $(t-1)$ is $\H_0$, at time $t$ we select bus
\begin{align}\label{eq:op}
\psi(t)=\argmax_{i\notin\psi_{t-1}}\ \max_{\mathcal{S}_t^i}\ M_i^0(t,\mathcal{S}_t^i)\ .
\end{align}
Similarly, when the maximum likelihood decision about the true model at time $(t-1)$ is $\H_1$, we select
\begin{align}\label{eq:op2}
\psi(t)=\argmax_{i\notin\psi_{t-1}}\ \max_{\mathcal{S}_t^i}\ M_i^1(t,\mathcal{S}_t^i)\ .
\end{align}
Determining the selection function in \eqref{eq:op} and \eqref{eq:op2} is computationally prohibitive as it involves an exhaustive search over all the possible subsets of unobserved buses. However, our analyses demonstrate that the complexity of such an exhaustive search over GMRFs can be reduced substantially by analytically proving that the optimal group of the buses to be measured belong to a small subset of buses. Specifically for each node $i$, the choice of the set $\mathcal{S}_t^i\setminus\{i\}$ is limited to the subset of the unobserved neighbors of $i$, i.e.,
\begin{equation}
\mathcal{U}_t^i\dff\N_i\setminus \psi_{t-1}\ .
\end{equation}
This indicates that for determining which node to select at each time in a GMRF it is sufficient to consider a shorter future for each node, while in general, we have to decide based on all the remaining nodes. The cardinality of the set of subsets of $\mathcal{U}_t^i$ is significantly smaller than that of unobserved nodes, which translates into significant reduction in the complexity of characterizing the  optimal selection functions. This observation is formalized in the following theorem.
\begin{theorem}\label{thm1}
At each time $t$, for all valid sequences $\mathcal{S}_t^i$ and for $u\in\{0,1\}$ we have
\begin{align}
\argmax_{i\notin\psi_{t-1}}\ \max_{\mathcal{S}_t^i} {M_u^i(t,\mathcal{S}_t^i)}=\argmax_{i\notin\psi_{t-1}}\max_{\mathcal{S}_t^i\subseteq\mathcal{U}_t^i} {M_u^i(t,\mathcal{S}_t^i)}\ .
\end{align}
\end{theorem}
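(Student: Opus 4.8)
The plan is to reduce the claimed identity to an elementary averaging inequality. First I would observe that in both $M_i^0$ and $M_i^1$ the set $\mathcal{S}_t^i$ enters only through the final term; the preceding sums range over the already–observed buses $\psi_{t-1}$ and are constants with respect to $\mathcal{S}_t^i$. Hence, for each fixed candidate $i\notin\psi_{t-1}$, maximizing $M_i^u(t,\mathcal{S}_t^i)$ over valid $\mathcal{S}_t^i$ is equivalent to maximizing $F_i^u(\mathcal{S})\dff\frac{1}{2|\mathcal{S}|}\sum_{j\in\mathcal{S}}g_u(r_{ij})$ over sets with $i\in\mathcal{S}\subseteq\B\setminus\psi_{t-1}$, where $g_0(r)\dff\log\frac{1}{1-r^2}$ and $g_1(r)\dff\log(1-r^2)+\frac{2r^2}{1-r^2}$. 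So the theorem comes down to showing that this discrete maximization is attained by a set contained in $\mathcal{U}_t^i\cup\{i\}$.

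The second step is to establish two sign facts about the per-bus weights. I would show $g_u(r)\ge 0$ for every admissible value (those with $r^2<1$), with equality iff $r=0$: for $u=0$ this is immediate since $0<1-r^2\le 1$; for $u=1$ I would substitute $s=r^2\in[0,1)$, write $h(s)=\log(1-s)+\frac{2s}{1-s}$, and check $h(0)=0$ together with $h'(s)=\frac{1+s}{(1-s)^2}>0$. Then I would invoke the structure of $\bm{Q}_0=[r_{ij}]$ coming from \eqref{eq:r}--\eqref{eq:phasor}: $r_{ij}=0$ whenever $(i,j)\notin\E$, and in particular $r_{ii}=0$, so the weight $g_u(r_{ij})$ vanishes for every $j$ that is neither $i$ nor a neighbor of $i$.

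The third step is the averaging argument. Given any valid $\mathcal{S}_t^i$, I would split it as $\mathcal{T}\sqcup\mathcal{Z}$ with $\mathcal{T}=\mathcal{S}_t^i\cap(\mathcal{U}_t^i\cup\{i\})$ and $\mathcal{Z}$ the remaining (unobserved, non-neighbor, non-$i$) buses. Since $g_u$ vanishes on $\mathcal{Z}$ and is nonnegative on $\mathcal{T}$, and since $i\in\mathcal{T}$ so that $\mathcal{T}$ is a nonempty valid set, discarding $\mathcal{Z}$ only shrinks the denominator while leaving the numerator unchanged, giving $F_i^u(\mathcal{S}_t^i)\le F_i^u(\mathcal{T})$. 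Consequently $\max_{\mathcal{S}_t^i}M_i^u(t,\mathcal{S}_t^i)=\max_{\mathcal{S}_t^i\subseteq\mathcal{U}_t^i\cup\{i\}}M_i^u(t,\mathcal{S}_t^i)$ for every $i\notin\psi_{t-1}$; since the two inner maxima coincide as functions of $i$, the outer $\argmax$ over $i$ is the same on both sides, which is exactly the assertion.

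I do not anticipate a serious obstacle: the only genuine computation is the monotonicity/nonnegativity check for $g_1$ in the second step, and the remainder is bookkeeping about the averaging inequality and the support of $[r_{ij}]$. The one point to handle carefully is the standing convention that $\mathcal{S}_t^i$ always contains $i$ with $r_{ii}=0$, so "restricting $\mathcal{S}_t^i$ to $\mathcal{U}_t^i$" is to be read (as in the discussion preceding the theorem) as restricting $\mathcal{S}_t^i\setminus\{i\}$ to $\mathcal{U}_t^i$, which is why $\mathcal{U}_t^i\cup\{i\}$ rather than $\mathcal{U}_t^i$ appears in the argument above.
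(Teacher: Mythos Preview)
Your argument is correct, and it is genuinely different from the paper's. You exploit the explicit closed form of the metrics together with the sparsity of $\bm{Q}_0=[r_{ij}]$: since $r_{ij}=0$ whenever $(i,j)\notin\mathcal{E}$, every non-neighbor contributes a zero summand to the future-looking term, and your averaging inequality immediately prunes $\mathcal{S}_t^i$ down to $\mathcal{U}_t^i\cup\{i\}$. The only nontrivial ingredient is the nonnegativity of $g_1$, which you verify cleanly.

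The paper, by contrast, does not work with the explicit formulas at all. It argues at the level of distributions: for any unobserved non-neighbor $v$ it picks a neighbor $j\in\mathcal{N}_i$ on the $i$--$v$ path and uses the global Markov property together with a KL-divergence chain-rule computation to show $D_{\textsc{\tiny KL}}(f_0(\theta_i,\theta_j)\|f_1(\theta_i)f_1(\theta_j))\ge D_{\textsc{\tiny KL}}(f_0(\theta_i,\theta_v)\|f_1(\theta_i)f_1(\theta_v))$ (and the reverse direction), concluding that the divergence---hence the information metric---is maximal between adjacent nodes. Your route is shorter and fully rigorous for the stated metrics; the paper's route is more structural and would survive if the metrics were replaced by other KL-based scores on a general MRF, since it never appeals to the particular Gaussian formulas or to the vanishing of $r_{ij}$ off the edge set. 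In effect, you prove the theorem as stated, while the paper proves the information-theoretic principle that underlies it.
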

\begin{proof}
See Appendix~\ref{App:thm1}.
\end{proof}
\noindent This theorem states that it suffices to search over the neighbors of each bus to find the bus that provides the most relevant information about the underlying event. The structure of the metric for each bus depends on the joint distribution of voltage phasor angles. Next, we show that the selection rule that only searches over the neighbors of one node achieves asymptotic optimality as the size of the network grows and the frequency of erroneous decisions tends to zero. This statement is formalized in the following theorem. 
\begin{theorem}\label{thm2}
For the quickest anomaly detection and localization problem given in \eqref{eq:Opt2}, the selection functions in \eqref{eq:op} and \eqref{eq:op2} achieve asymptotic optimality as $\beta$ approaches zero, i.e., for $i\in\{0,1\}$
\begin{align}
\lim_{\beta\rightarrow 0}\frac{\inf_{\tau,\delta,\psi_\tau}\ \mathbb{E}_i\{\tau\}}{\inf_{\delta}\ \mathbb{E}_i\{\tau_o\}}=1 \ ,
\end{align}
where $\tau_o$ is the stopping time when the bus selection rules are given in \eqref{eq:op} and \eqref{eq:op2}.
\end{theorem}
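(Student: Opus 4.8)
The plan is to sandwich the optimal expected sample size between a universal information-theoretic lower bound, valid for every admissible triple $(\tau,\delta,\psi_\tau)$ that meets the reliability constraint $\mathsf{P_e}\leq\beta$ in \eqref{eq:Opt2}, and a matching upper bound achieved by the proposed selection rule; both bounds are of the form $|\log\beta|/D_i^*\cdot(1+o(1))$ as $\beta\to 0$, where $D_i^*$ is the value of the sampling game
\begin{align}\label{eq:Dstar}
D_i^* \;\dff\; \sup_{\lambda}\ \sum_{k}\lambda_k\, D\big(f_i^k\,\big\|\,f_{\bar i}^k\big)\ ,
\end{align}
with $\bar i\dff 1-i$, the supremum over randomized bus-selection laws $\lambda$, and $f_i^k$ the conditional density of the measurement at bus $k$ given the previously collected data under $\sH_i$; in the binary setting the minimization over competing hypotheses that appears in the general active-hypothesis-testing rate collapses to this single term. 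The lower bound follows from the standard change-of-measure argument: under $\sH_i$, a likelihood-ratio/Wald identity combined with $\mathsf{P_e}\leq\beta$ forces the accumulated log-likelihood ratio against $\sH_{\bar i}$ to cross level $|\log\beta|(1+o(1))$ before stopping, and since no sampling law can make the per-step drift of that statistic exceed $D_i^*$, one obtains $\mathbb{E}_i\{\tau\}\geq |\log\beta|/D_i^*\cdot(1+o(1))$. I would reproduce this along the lines of Chernoff's analysis (cf.\ \cite{Chernoff1959}), the only extra care being that the GMRF correlations render the per-sample information conditional, so $D_i^*$ must be read as a Ces\`aro limit of conditional divergences rather than a single-letter quantity.

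For the matching upper bound I would argue in three steps. First, equipped with a sequential log-likelihood-ratio stopping rule with threshold of order $|\log\beta|$, the scheme satisfies $\mathsf{P_e}\leq\beta$; this is a routine boundary-crossing/overshoot estimate. Second, under $\sH_i$ the running maximum-likelihood decision is eventually correct almost surely, so from some a.s.-finite time onward the selection rule \eqref{eq:op}--\eqref{eq:op2} consistently selects the bus maximizing the look-ahead metric $M_k^{i}(t,\mathcal{S}_t^k)$ of \eqref{eq:metric:GMRF1}--\eqref{eq:metric:GMRF2} over the unobserved buses $k$, and by Theorem~\ref{thm1} this inner maximization is confined to the unobserved neighbors $\mathcal{U}_t^k$. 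Third -- and this is the crux -- I would show that greedily maximizing $M_k^{i}$, whose first sum is the information already realized from the observed buses and whose second sum is an averaged anticipated future gain, drives the empirical bus-selection frequencies toward a law attaining the value $D_i^*$ in \eqref{eq:Dstar}, so that the log-likelihood ratio between $\sH_i$ and $\sH_{\bar i}$ accumulates linearly at rate $D_i^*$. A martingale strong-law-of-large-numbers argument, as in Chernoff's treatment, then yields $\tau_o\leq |\log\beta|/D_i^*\cdot(1+o(1))$ almost surely, and a uniform-integrability bound on $\tau_o/|\log\beta|$ upgrades this to the corresponding statement in expectation. Combining the lower and upper bounds and taking the ratio yields the limit asserted in the theorem.

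The main obstacle is the third step: establishing that the greedy single-node look-ahead rule is asymptotically no worse than re-solving the full allocation problem \eqref{eq:Dstar} at every time. Three coupled effects must be controlled simultaneously. (i) The GMRF correlations make the incremental informativeness of a bus depend on which buses have already been measured, so one must verify that the realized and anticipated terms in $M_k^{i}$ aggregate correctly along the entire sample path rather than merely step by step. (ii) Buses are sampled without replacement, so the supply of highly informative buses is finite; here I would invoke the large-network regime to guarantee that, until the threshold is reached, a fresh pool of buses as informative as the optimal allocation demands remains available, so that depletion contributes only a lower-order term. (iii) The transient phase before the maximum-likelihood estimate locks onto $\sH_i$ costs merely $o(|\log\beta|)$ samples and is absorbed into the $(1+o(1))$ factor. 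Quantifying the interaction in (ii) between the neighbor-restricted search guaranteed by Theorem~\ref{thm1} and the gradual exhaustion of informative buses is where the bulk of the technical effort lies.
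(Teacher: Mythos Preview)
Your plan follows the classical Chernoff controlled-sensing template: define a per-step information value $D_i^*$ as the optimum of a sampling game, prove a change-of-measure lower bound $|\log\beta|/D_i^*$, and then show that the greedy look-ahead rule asymptotically attains this rate. That is a sound route, but it is \emph{not} the one the paper takes, and the difference is substantive.

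The paper's proof does not set up any sampling game or argue that the greedy rule drives the empirical bus frequencies toward an optimal allocation. Instead it \emph{postulates} at the outset that, as the network size $N\to\infty$, the normalized log-likelihood ratio converges completely to a deterministic limit $I_i$ \emph{for every sequence of buses} $\psi_t$ and every realization $\bt_t$; see the assumptions surrounding \eqref{eq:cc1}--\eqref{eq:cc2} in Appendix~\ref{App:thm2}. With that uniform convergence in hand, the limit $I_i$ does not depend on the selection rule at all, so the lower bound (via Wald's identity, a set-difference decomposition, and the generalized Chebyshev inequality) and the upper bound (via the last-exit times $T_i(h)$ in \eqref{T0}--\eqref{T1} and a direct threshold-overshoot estimate) automatically coincide. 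In particular, the three obstacles you flag --- path-dependent conditional divergences, depletion of informative buses under sampling without replacement, and the transient before the maximum-likelihood estimate locks on --- are all absorbed into the assumed complete convergence rather than handled explicitly.

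What each approach buys: the paper's argument is short and uses only elementary SPRT-type estimates, but its conclusion is effectively conditional on the uniform complete-convergence hypothesis, which is precisely the ``crux'' you set out to prove. Your route is more honest about where the work lies and, if carried through, would yield an unconditional statement that genuinely distinguishes the proposed selection rule from inferior ones; the price is the substantial technical effort you correctly identify in items (i)--(iii).
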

\begin{proof}
See Appendix \ref{App:thm2}.
\end{proof}
\noindent Next, by leveraging the results of theorems \ref{thm1} and \ref{thm2} we provide an optimal bus selection rule for the general setting with arbitrary number of anomalous events, $M$, and number of measurements taken at-a-time, $\ell$. 

\subsubsection{Implementation}

Inspired by the results for the binary setting ($M=1$), we devise data-adaptive bus selection rules that can accommodate any arbitrary number of anomalous events $M$. For this purpose, corresponding to each event $R_k$, we assign the following time-varying metric to each bus $i$
\begin{align}
M_i^k(t)\;=\; &\; \frac{1}{2} \sum_{j\in\psi_{t-1}} \log\frac{1}{1-(r^k_{ij})^2}+(r^k_{ij})^2(\Delta\theta_j^2-1) \nonumber\\
+\; & \frac{1}{2|\mathcal{S}_t^i|}\sum_{j\in\mathcal{S}_t^i}\log\frac{1}{1-(r^k_{ij})^2}\ ,
\end{align}
where we have defined
\begin{align}\label{eq:r^k}
\beta^k_i  \;\dff\; \bigg(\sum_{(i,j)\in\E_k} \frac{1}{x^k_{ij}}\bigg)^{-1} \ ,\quad\text{and }\quad r^k_{ij} & \;\dff\; \frac{\beta^k_i}{x^k_{ij}} \ .
\end{align}
Based on these metrics, the optimal data-adaptive bus selection  rule at time $t$ involves selecting the buses that render the largest values in the set
\begin{align}
\big\{M_i^k(t)\; :\ k\in \{0,\dots,M\}\ \ \text{and}\ \ i\in\N\setminus\psi_{t-1} \big\} \ .
\end{align}
Inspired by the observation in Theorem \ref{thm2}, we characterize a simple rule for implementing these selection rules. We first note that for a fixed time $t$ and bus $i$ metric $M_i^k(t)$ takes relatively similar values under different events. In other words, the dynamic range of the set
\begin{align}
\big\{M_i^0(t),\dots,M_i^M(t)\big\}
\end{align}
is very narrow. This is primarily due to the fact that each event $R_k$ only affects a limited number of buses, and consequently, the effects on $M_i^k(t)$ are minor. Motivated by reducing the computational complexity, for each bus $i$, we retain only metric $M_i^0(t)$ as a representative for the set $\{M_i^0(t),\dots,M_i^M(t)\}$. This leads to assigning only one metric to bus $i$ at time $t$ denoted by
\begin{align}\label{eq:busmetric}
M_i(t)\;=\; &\; \frac{1}{2} \sum_{j\in\psi_{t-1}} \log\frac{1}{1-r^2_{ij}}+r^2_{ij}(\Delta\theta_j^2-1) \nonumber\\
+\; & \frac{1}{2|\mathcal{S}_t^i|}\sum_{j\in\mathcal{S}_t^i}\log\frac{1}{1-r_{ij}^2}\ .
\end{align}
This metric consists of three terms, where the first and third terms are functions of the correlation structure through $\{r_{ij}\}$. While $M_i(t)$ as defined in \eqref{eq:busmetric} can be used directly for the bus selection, we offer an alternative two-stage selection rule in order to place more emphasis on the data. In this two-stage approach, in the first stage we focus on the buses that are already observed, and identify the buses whose measurements have the largest level of deviation from the expected values, i.e., the buses with largest $|\theta_i-\bar\theta_i|$. This provides an estimate of the location of the underlying anomaly event, and is equivalent to maximum likelihood decision about the true hypothesis model. In the second stage, among the neighbors of the buses with larger $|\Delta\theta|$, we identify buses with the largest metric $M_i(t)$. Also, at $t=1$, data collection is initialized by selecting $\ell$ buses with the most number of neighbors such that the most informative measurements are collected. The steps of bus selection rule are presented in Algorithm~\ref{table:sel}. Figure~\ref{fig:selection} illustrates the bus selection process for IEEE $14$-bus system under the outage of the line connecting buses $9$ and $14$. By setting $\ell=2$, buses $4$ and $6$, which have the highest degree in this system, are selected at time $t=1$. Since bus $4$ is a neighbor of bus $9$, it experiences larger deviation in its voltage phasor value. Therefore, at time $t=2$ among the neighbors of bus $4$, which are buses $\{2,3,5,7,9\}$, the two buses with the largest metric values are selected, and the process continues in this way.
\begin{figure}[t]
\centering
\includegraphics[width=0.4\textwidth]{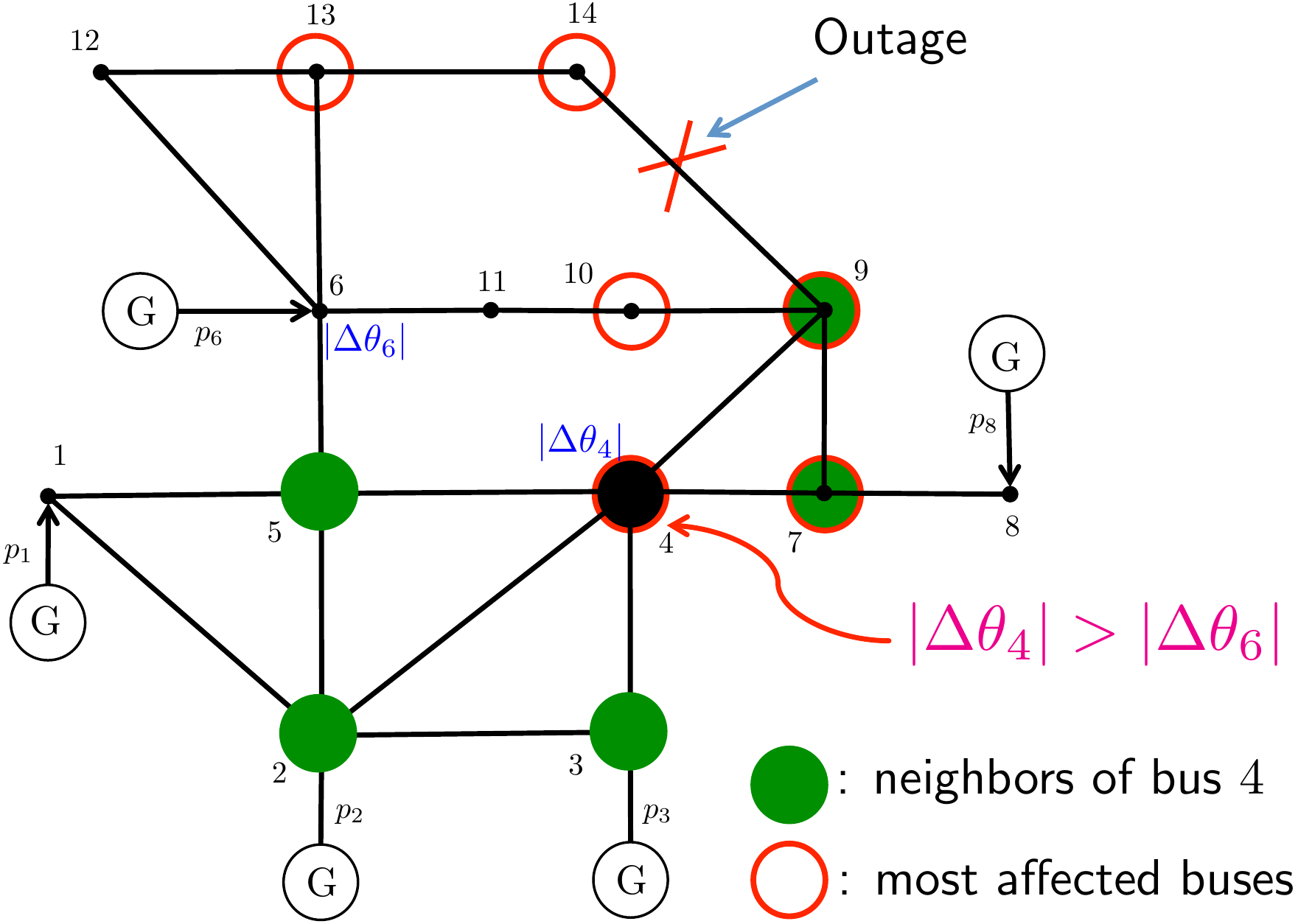}
\renewcommand{\figurename}{Fig.}
\caption{Bus selection process in IEEE $14$-bus system under the outage of line connecting buses $9$ and $14$.}
\label{fig:selection}
\end{figure}

Since the connectivity degree of the  graph underlying the grids is substantially smaller than the size of the grid (e.g., in the IEEE $118$-bus  model the degree is $12$), the complexity of the proposed bus selection rule is substantially lower than that of the exhaustive search.

{\footnotesize
\begin{table}[!h]
\renewcommand{\arraystretch}{1}
\renewcommand{\tablename}{Algorithm}
\footnotesize
\captionsetup{justification=centering}
\caption{Data-adaptive bus selection}
%\vspace{-0.1 in}
\centering
\begin{tabular}{ll}
\hline\hline
1 & \textbf{Set} $t=0$ and $\cT=\{1,\dots,N\}$\\
2 & \textbf{For} $i=1,\dots,N$ \textbf{repeat}\\
3 & \quad deg$(i)\leftarrow$ Number of buses connected to bus $i$\\
4 & \quad $M(i)\leftarrow\max_{\mathcal{U}\subseteq\N_i}\frac{1}{|\mathcal{U}|}\sum_{j\in\mathcal{U}}\log \frac{1}{1-r^2_{ij}}$\\
5 & \textbf{End for}\\
6 & $\cT\leftarrow$ Sorted $\cT$ based on decreasing deg$(\cdot)$\\
7 & $\psi(t)\leftarrow$ First $\ell$ elements of $\cT$\\
8 & \textbf{While} stopping criterion is not met \textbf{do}\\
9 & \quad Take measurements from buses in $\psi(t)$\\
10 & \quad $S\leftarrow\psi_t$\\
11 & \quad $t \leftarrow t+1$\\
12 & \quad $\psi(t)\leftarrow\{\}$\\
13 & \quad \textbf{While} $|\psi(t)|<\ell$ \textbf{do}\\
14 & \quad\quad $i\leftarrow\argmax_{j\in S} |\theta_j-\bar\theta_j|$\\
15 & \quad\quad $\N_i\leftarrow$ Neighbors of $i$ sorted based on decreasing $M(\cdot)$\\
16 & \quad\quad \textbf{If} $|\N_i|<\ell-|\psi(t)|$ \textbf{then}\\
17 & \quad\quad\quad $\psi(t)\leftarrow\psi(t)\cup\N_i$\\
18 & \quad\quad \textbf{Else}\\
19 & \quad\quad\quad $\psi(t)\leftarrow\psi(t)\cup\{\N_i(1),\dots,\N_i(\ell-|\psi(t)|)\}$\\
20 & \quad\quad \textbf{End if}\\
21 & \quad\quad $S\leftarrow S\setminus i$\\
22 & \quad \textbf{End while}\\
23 & \textbf{End while}\\
24 & Set $\tau=t$\\
%25 & \textbf{Return} $\tau$, $\psi_\tau$ and $\Delta\bt_\tau$\\
\hline\hline
\end{tabular}
\label{table:sel}
\end{table}
}

\subsection{Stopping Time and Decision Rule}
%\vspace{-0.05 in}

The data-acquisition process is terminated as soon as a decision can be made with the desired reliability, i.e., error probability is controlled below $\beta$. To formalize this, we define $\bn\dff \bp-\bar\bp$
%\begin{align}\label{eq:noise}
%\bn\dff \bp-\bar\bp \ ,
%\end{align} 
as the perturbations in the power injection incurred by an anomaly, which can be modeled as a zero-mean Gaussian random vector \cite{Zhu:PS12}. We denote the covariance matrix of $\bn$ under $R_k$ by $\bm\Sigma_k$. Hence, based on \eqref{eq:matrix}, under event $R_k$ we have
\begin{equation}\label{eq:prepost}
\bH\bar\bt+\bn=\bH_k\bt\ ,\quad\mbox{for}\ k\in\{0,\dots,M\}\ .
\end{equation}
We denote the incident matrix of the grid by $\bM\in\mathbb{R}^{N\times L}$, which is constructed based on the topology of the grid when there exists no anomaly in the following form. The $i$-th column of $\bM$, denoted by $\blm_i$, corresponds to line $i\in\L$ and all its entries are zero except at two locations that specify the buses connected by line $i$. Specifically, if line $i$ connects buses $m$ and $n$, then the $m$-th and $n$-th entries of $\blm_i$ are $+1$ and $-1$, respectively. Hence, matrix $\bH_k$ can also be generated from the incident matrix of the network and the reactance of transmission lines as follows.
\begin{align}\label{eq:X}
\bH_k & =\sum_{i\in{\L}} X_k[ii]{\blm_{i}}{\blm_{i}}^T= \bM\bX_k\bM^T\ ,
\end{align}
where $\bX_k\in\mathbb{R}^{L\times L}$ is a diagonal matrix defined such that when the $i$-th transmission line connects buses $m$ and $n$ we have $X_k[ii]=\dfrac{1}{x^k_{mn}}$. Hence, \eqref{eq:X} implies that 
\begin{align}\label{eq:hhat}
\bH_k=\bH-\sum_{i\in R_k} \big(X_0[ii]-X_k[ii]\big){\blm_{i}}{\blm_{i}}^T\ ,
\end{align}
which in conjunction with \eqref{eq:prepost} yields
\begin{align}\label{eq:hdt}
\bH\cdot{\Delta\bt} &= \sum_{i\in R_k} \big(X_0[ii]-X_k[ii]\big)\blm_{i}\blm_{i}^T \bt+\bn = \bM\bs_k+\bn\ , \nonumber
\end{align}
%&= \sum_{i\in R_k} s_k(i)\blm_{i}+\bn\nonumber\\
where $\bs_k\in\mathbb{R}^L$ is defined as
\begin{equation}
\bs_k[i]\dff\left\{\begin{array}{ll}
\vspace{2mm}
\big(X_0[ii]-X_k[ii]\big)\blm_{i}^T \bt & \mbox{if}\ i\in R_k\\
0 & \mbox{Otherwise}
\end{array}\right.\ .
\end{equation}
The locations of the non-zero elements of vector $\bs_k$ correspond to the indices of the anomalous lines. By assuming that each anomalous event affects a small fraction of the total number of transmission lines, $\bs_k$ becomes a sparse vector. 
Now, by defining $\bB\dff\bH^{-1}$ we obtain
\begin{align}\label{eq:CS}
{\Delta}\bt=\bB\bM\bs_k+\bB\bn\ .
\end{align}
Hence, at the stopping time $\tau$ we have
\begin{align}\label{eq:subset}
{\Delta}\bt_\tau &=\bB_{\tau}\bM\bs_k+\bB_{\tau}\bn\ , 
\end{align}
where $\bB_{\tau}$ is the matrix constructed from $\bB$ by keeping its rows corresponding to set $\psi_\tau$. Since the noise vector $\bB_{\tau}\bn$ is colored, we include a pre-processing whitening stage. For this purpose, we consider the following singular value decomposition (SVD) of matrix $\bB_{\tau}\bm\Sigma_k^{\frac{1}{2}}$, where $\bm\Sigma_k$ is the covariance matrix of $\bn$ under event $R_k$:
\begin{align}
\bB_{\tau}\bm\Sigma_k^{\frac{1}{2}}=\bU_k\bm\Lambda_k \bV_k^T\ .
\end{align}
Then, by defining 
\begin{align}
\by_k &\dff\bm{\Lambda}_k^{-1}\bU_k^T {\Delta}\bt_\tau \ ,\\
\tilde\bn_k &\dff \bm{\Lambda}_k^{-1}\bU_k^T\bB_{\tau}\bn \ ,\\
\label{eq:def}
\text{and }\quad \bA_k&\dff\bV_k^T\bM\ ,
\end{align}
from \eqref{eq:subset}--\eqref{eq:def}, corresponding to event $R_k$ we obtain
\begin{align}\label{eq:CS2}
\by_k=\bA_k\bs_k+\tilde\bn_k\ ,
\end{align}
where $\tilde\bn_k$ is a white noise vector with covariance matrix $\bI$. This leads to an overcomplete representation of sparse vector $\bs_k$ by measurement vector $\by_k$ given in \eqref{eq:CS2}. Therefore, off the shelf tools from compressed sensing can be applied to find the non-zero elements of $\bs_k$ to detect and localize any anomaly event. In this paper, we use orthogonal matching pursuit (OMP) as a fast sparse recovery algorithm, which is summarized in Algorithm~\ref{table:OMP}. In Algorithm~\ref{table:OMP}, the $i$-th column of matrix $\bA_k$ is denoted by vector $\ba_{k,i}$, and the matrix composed of a set of columns of matrix $\bA_k$ indexed in set $\cT$ is denoted by $\bA_{k,\cT}$. The value of threshold $\gamma$ depends on the power of perturbation noise and the performance accuracy constraint $\beta$. It is noteworthy that parameter $\beta$ is set according to the error margin that the network operator can tolerate in localizing the anomalous events. Both $\beta$ and $\gamma$ can be calculated based on some historical data or through a comprehensive simulation of power grid under different events.
{\footnotesize
\begin{table}[h]
\renewcommand{\arraystretch}{1}
\renewcommand{\tablename}{Algorithm}
\footnotesize
\captionsetup{justification=centering}
\caption{OMP Algorithm}
%\vspace{-0.1 in}
\centering
\begin{tabular}{ll}
\hline\hline
1 & \textbf{Inputs} $\by_k$ and $\bA_k$ for $k\in\{0,\dots,M\}$\\
2 & \textbf{Set} $\br_k=\by_k$, $\bs_k=\bm{0}$ and $\cT_k=\{\}$\\
3 & \textbf{While} $\min_k\|\br_k\|>\gamma$ \textbf{and} $|\cT_1|<\eta_{\rm max}$ \textbf{do}\\
3 & \qquad \textbf{For} $k=1,\dots,M$ \textbf{Repeat}\\
4 & \qquad\qquad $I_k\leftarrow\displaystyle\argmax_{i} \dfrac{|\ba_{k,i}^T\br_k|}{|\ba_{k,i}|}$ \\
4 & \qquad\qquad $\cT_k\leftarrow\cT_k \cup\{I_k\}$\\
5 & \qquad\qquad $\bs_k[\cT_k]=\big(\bA_{k,\cT_k}^T \bA_{k,\cT_k}\big)^{-1}\bA_{k,\cT_k}^T \by_k$\\
6 & \qquad\qquad $\br_k=\by_k-\bA_k\bs_k$\\
7 & \qquad\textbf{End for}\\
7 & \textbf{End while}\\
%8 & \textbf{Return} $\bs$\\
8 & \textbf{If} $\min_k\|\br_k\|>\gamma$\\
9 & \qquad \textbf{Continue} sampling\\
10 & \textbf{Else}\\
11 & \qquad $i\leftarrow \argmin_k\|\br_k\|$ \\
11 & \qquad \textbf{Stop} sampling and \textbf{Return} $\bs_i$\\
12 & \textbf{End if}\\
\hline\hline
\end{tabular}
%\vspace{-0.1 in}
\label{table:OMP}
\end{table}
}

This data-adaptive data acquisition and decision-making strategy works based on the offline  and real-time information from the grid. The offline information includes the network topology, the nominal values for the voltage phasor angles in the fault-free situation as well as each anomalous event, which are computed based on the network topology and historical data. The real-time data are the information collected from the buses during the data gathering process.

\section{Case Study: Line Outage Detection}
\label{sec:sim}

When a transmission line is overloaded, the protection devices of the grid automatically remove that line to prevent major damages to the grid and electrical devices. Line outage can be considered a special case of anomaly in the grid, and the devised algorithm can be applied to detect and localize them. For the simulations, we use the software toolbox MATPOWER to generate synthetic data for voltage phasor angles under different outage events \cite{MATPOWER}. In the simulations and by using the IEEE standard systems, we compare the results of data-adaptive data-acquisition approach with the pre-specified bus selection method  in terms of decision accuracy and the number of required measurements. We also evaluate the interplay among delay, number of measurements, and decision accuracy.

\subsection{Gains of Dynamic Bus Selection}

The proposed approach aims to detect and localize the lines under outage with the minimum number of measurements to achieve a target reliability level. The major feature of this approach is data-adaptive selection of buses for acquiring the measurements that are most informative about the state of the grid. In order to assess the gain of such dynamic bus selection, we compare the performance of our proposed approach with the pre-specified bus selection rule in the $118$-bus IEEE standard system. To this end, we fix the number of measurements in the pre-specified bus selection rule and the data-adaptive approach to be the same and compare
their accuracy in localizing the underlying event. In the pre-specified method we select the set of buses with the most number of neighbors, and for the data-adaptive technique we set $\ell=5$. Figure \ref{fig:Fig1} compares the accuracy performance of both methods under the single line outage setting when $\R$ is the set of all single line outage events in which the network is still connected. It is assumed that the perturbation noise vector is uncorrelated with power $1\%$ of the average injected power before any outage, and the number of lines under outage is known. It is observed that for equal number of measurements, the data-adaptive approach uniformly
outperforms the pre-specified method. The reason is that in data-adaptive approach, the correlation structure among the measurements is exploited judiciously to collect measurements from more relevant buses that provide more relevant information about the underlying outage event. Also, it is observed that in the data-adaptive approach the performance  gains diminishes as the number of measurements exceeds $70$, which indicates that by partially observing the grid we can achieve a performance close to the performance of full observation.

\setcounter{table}{0}

\begin{table}[b]
\centering
\captionsetup{justification=centering}
\caption{Average running time comparison.}
%\resizebox{8.5cm}{1.8cm} {
\renewcommand{\arraystretch}{1.5}
\begin{tabular}{|c||c|c|c|c|}
\hline
\multicolumn{1}{|c||}{Number of measurements} & $30$ & $50$ & $70$ & $90$ \\ 
\hhline{|=||=|=|=|=|}
$t_{\rm DA}(sec)$ & $0.2774$  & $0.8069$  & $1.4681$  & $2.0817$ \\
\hline
$t_{\rm ES}(sec)$  & $6.4415$  & $14.4312$  & $20.8759$ & $23.398$ \\
\hline
$t_{\rm ES}/t_{\rm DA}$  & $23.2$  & $17.9$  & $14.2$ & $11.2$ \\
\hline
\end{tabular}

 \label{tab:sim1}
\end{table}

Figure~\ref{fig:Fig2} compares the performance for different number of lines in outage under the same settings as in Fig. \ref{fig:Fig1}. We assume that multiple line outages is a result of the overloading of neighboring lines when a single outage occurs. Hence, the lines under outage are in the same locality of the grid. Motivated by Fig. \ref{fig:Fig1}, we set the number of measurements in both methods to $70$ and also include the results for full observation of the network. It is observed that the data-adaptive approach, for single and multiple line outage events, outperforms the pre-specified method by a considerable margin and its performance, as  expected, is close to the full observation of the network.

\begin{figure}[t]
\centering
\includegraphics[width=3.2in,height=2.1in]{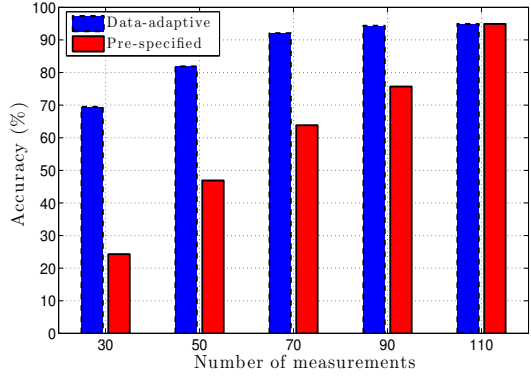}
\renewcommand{\figurename}{Fig.}
\captionsetup{justification=centering}
\caption{Decision accuracy versus number of measurements.}
%\vspace{-0.1in}
\label{fig:Fig1}
\end{figure}

\begin{figure}[t]
\centering
\includegraphics[width=3.2in,height=2.1in]{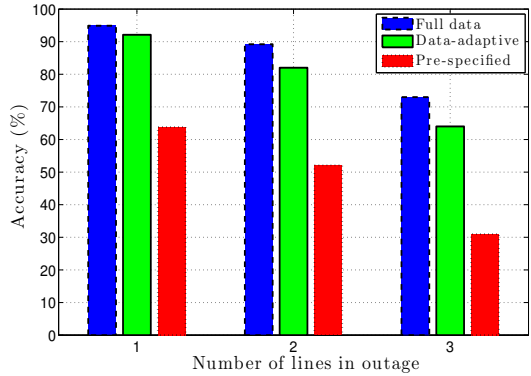}
\renewcommand{\figurename}{Fig.}
%\vspace{-0.4in}
\captionsetup{justification=centering}
\caption{Decision accuracy versus number of lines under outage.}
%\vspace{-0.1in}
\label{fig:Fig2}
\end{figure}

In order to assess the computational advantage of the proposed approach for bus selection, established in Theorem \ref{thm1}, we compare the simulation time required for implementing the proposed approach with an exhaustive search for finding the most relevant buses. We denote the average of the simulation time over all possible single line outage events for the exhaustive search and the data-adaptive search by $t_{\rm ES}$ and $t_{\rm DA}$, respectively. The results, provided in Table \ref{tab:sim1}, show that for $70$ measurements, which performs close to observing the entire network, data-adaptive collection of measurements is $14.2$ times faster.

\subsection{Trade-off Among Performance Measures}

We consider single line outage setting and evaluate the interplay among different performance measures by changing $\ell$ and the target decision quality $\beta$. In Fig. \ref{fig:dif_ell} the number of required measurements to achieve a certain accuracy level is compared for different values of $\ell$. It is observed that as $\ell$ increases we need more measurements to achieve the same decision accuracy, because larger $\ell$ means taking more measurements at the same time and they cannot incorporate the information of the current time instant. In other words, for $\ell=1$ we take one measurement based on the entire past measurements while in $\ell=5$ for all $5$ new measurements we use the same  information. Also, it is observed that the number of required measurements for improving accuracy from $60\%$ to $70\%$ is less than the one required for improving accuracy from $70\%$ to $80\%$. In order to evaluate the impact of $\ell$ on data collection delay, which is the number of time steps required to collect all the measurements, in Fig.~\ref{fig:Delay_vs_ell} we compare average delay for various $\ell$ and different detection accuracy levels. It is observed that for smaller $\ell$, improving detection accuracy incurs more delay compared to larger $\ell$. Furthermore, for smaller $\ell$ and the same accuracy performance, decreasing $\ell$ leads to more delay.

\begin{figure}[t]
\centering
\includegraphics[width=3.2in,height=2.1in]{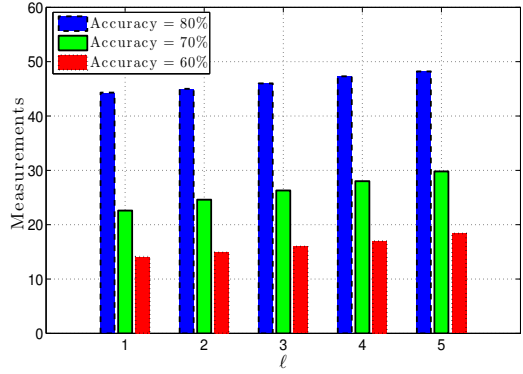}
\renewcommand{\figurename}{Fig.}
\captionsetup{justification=centering}
\caption{Number of measurements versus $\ell$ for different accuracy level.}
%\vspace{-0.1in}
\label{fig:dif_ell}
\end{figure}

\begin{figure}[t]
\centering
\includegraphics[width=3.2in,height=2.1in]{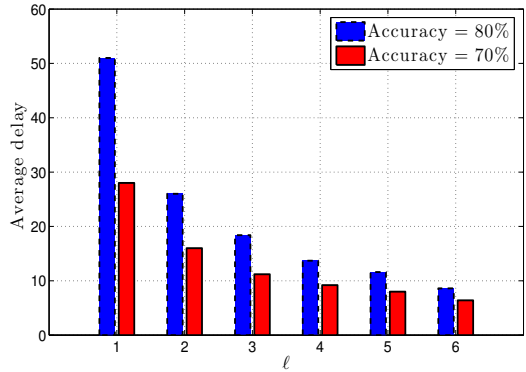}
\renewcommand{\figurename}{Fig.}
\captionsetup{justification=centering}
\caption{Average delay versus $\ell$ for different accuracy level.}
%\vspace{-0.1in}
\label{fig:Delay_vs_ell}
\end{figure}

\subsection{Scalability and Complexity}

In order to evaluate the scalability of the proposed detection algorithm, we consider the Polish power system provided by MATPOWER ``case2383wp'' casefile which is a $2383$-bus system. We set $\ell=10$ and consider the noise-free case. The performance of the proposed selection approach for different number of measurements is compared with pre-specified selection rule in Fig.~\ref{fig:Fig6}. It is observed that even for large-scale power systems, the data-adaptive selection rule can achieve considerable performance by selecting a subset of buses in the grid, and outperforms the pre-specified selection approach by a large margin. In fact, as the grid size grows the performance gain improves too. This is primarily due to the fact that larger grids provide more freedom for selecting the buses.

\begin{figure}[t]
\centering
\includegraphics[width=3.2in,height=2.1in]{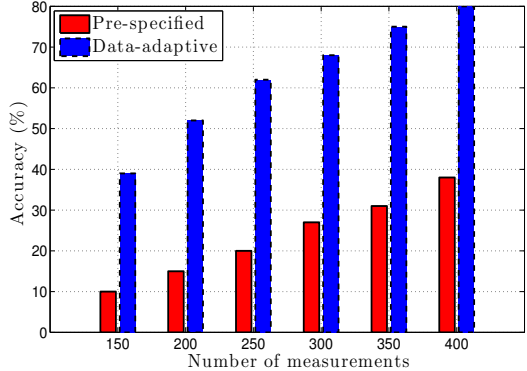}
\renewcommand{\figurename}{Fig.}
\captionsetup{justification=centering}
\caption{Decision accuracy versus number of measurements for noise-free case.}
%\vspace{-0.1in}
\label{fig:Fig6}
\end{figure}

%\vspace{-0.1 in}
\section{Conclusion}
%\vspace{-0.05 in}

The problem of detecting and localizing anomalies in transmission lines by using the minimum number of measurements has been considered. By adopting a stochastic graphical model for the voltage phasor angles, a data-adaptive strategy for coupled data-acquisition and decision-making processes is designed. Specifically, in this graphical framework the grid connectivities impose a correlation structure among the measurements from different buses. Corresponding to each possible anomalous event, the underlying correlation structure takes a specific form according to the associated topology and parameters of the grid. Hence, depending on the true correlation model, the measurements collected from different buses have different information quality. Data-adaptive monitoring of the network proposed in this paper identifies the most informative buses under each event and minimizes the number of required measurements for a reliable decision about the existing anomaly. A case study for line outage detection and localization confirms the gains of the proposed approach in the IEEE $118$-bus system as well as a $2383$-bus system.

\appendix

\section{Proof of Theorem \ref{thm1}}
\label{App:thm1}

To prove this, we consider a node $v\notin\mathcal{U}_t^i$ at time $t$ and show that one of these cases occurs for the selection of node $v$ in the future:
\begin{enumerate}
\item it will be independent of the sample taken from node $i$ at time $t$, i.e., the data observed from node $i$ has no impact on the information that will be acquired by observing node $v$ in the future; or,
\item despite dependence of its information on the sample taken from node $i$, the amount of this information will be less than the expected information of observing the best subset of $\mathcal{U}_t^i$, i.e.,
\begin{align}\label{eq:proof1}
\max_{\mathcal{S}_t^i\subseteq\mathcal{U}_t^i} \frac{M_i^{\ell}(t,\mathcal{S}_t^i\cup\{v\})}{|\mathcal{S}_t^i|+1}\leq\max_{\mathcal{S}_t^i\subseteq\mathcal{U}_t^i} \frac{M_i^{\ell}(t,\mathcal{S}_t^i)}{|\mathcal{S}_t^i|}\ .
\end{align}
\end{enumerate}
We consider all scenarios for node $v\notin\mathcal{U}_t^i$ and show how each scenario falls into one of these two categories. Since the graph is connected, there exists a node $j\in\N_i$ which belongs to the path between nodes $i$ and $v$. If node $j$ belongs to the set $\mathcal{S}_t^i$ that maximizes the right hand side of \eqref{eq:proof1}, due to the global Markov property, $i$ and $v$ will be conditionally independent, which makes case $1$ true.

Now, we only need to show that whenever node $j$ is outside the set $\mathcal{S}_t^i$ that maximizes the right hand side of \eqref{eq:proof1}, inclusion of node $v$ will reduce the average information. To this end, we note that the marginal distribution of each random variable $\theta_i$ under both hypothesis is the same. We prove that the information of observing node $j$ is greater than that of observing node $v$ which means that if $j$ does not belong to the set that maximizes the normalized Kullback-Leibler (KL) divergence, node $v$ should not be in that set, too. For this purpose, by denoting the probability density function under $\H_i$ by $f_i$ and the KL divergence between $f_0$ and $f_1$ by $D_{\textsc{\tiny KL}}(f_0\|f_1)$, we compute $D_{\textsc{\tiny KL}}(f_0(\theta_i,\theta_j,\theta_v)\|f_0(\theta_i)f_0(\theta_j,\theta_v))$ by following two different strategies and compare the results. 
\begin{align}\label{eq:H1}
D &_{\textsc{\tiny KL}} (f_0(\theta_i,\theta_j,\theta_v)\|f_0(\theta_i)f_0(\theta_j,\theta_v)) \nonumber \\
&= D_{\textsc{\tiny KL}} (f_0(\theta_i,\theta_j)f_0(\theta_v|\theta_j)\|f_0(\theta_i)f_0(\theta_j)f_0(\theta_v|\theta_j)) \nonumber \\
&= D_{\textsc{\tiny KL}} (f_0(\theta_i,\theta_j)\|f_0(\theta_i)f_0(\theta_j)) \nonumber \\
&= D_{\textsc{\tiny KL}} (f_0(\theta_i,\theta_j)\|f_1(\theta_i)f_1(\theta_j)) \ .
\end{align}
On the other hand
\begin{align}\label{eq:H1_1}
D_{\textsc{\tiny KL}} & (f_0(\theta_i,\theta_j,\theta_v)\|f_0(\theta_i)f_0(\theta_j,\theta_v)) \nonumber \\
=\,& D_{\textsc{\tiny KL}} (f_0(\theta_i,\theta_v)f_0(\theta_j|\theta_i,\theta_v)\|f_0(\theta_i)f_0(\theta_v)f_0(\theta_j|\theta_v)) \nonumber \\
=\,& D_{\textsc{\tiny KL}} (f_0(\theta_i,\theta_v)\|f_0(\theta_i)f_0(\theta_v)) \nonumber \\
& + D_{\textsc{\tiny KL}} (f_0(\theta_j|\theta_i,\theta_v)\|f_0(\theta_j|\theta_v)) \nonumber \\
\geq\,& D_{\textsc{\tiny KL}} (f_0(\theta_i,\theta_v)\|f_0(\theta_i)f_0(\theta_v)) \nonumber \\
=\,& D_{\textsc{\tiny KL}} (f_0(\theta_i,\theta_v)\|f_1(\theta_i)f_1(\theta_v)) \ ,
\end{align}
where the inequality holds due to the non-negativity of KL divergence. Since the left hand side of \eqref{eq:H1} and \eqref{eq:H1_1} are the same, we have
\begin{align}\label{eq:procc}
D_{\textsc{\tiny KL}} (f_0(\theta_i,\theta_j)\| & f_1(\theta_i)f_1(\theta_j)) \geq \nonumber \\ & D_{\textsc{\tiny KL}} (f_0(\theta_i,\theta_v)\|f_1(\theta_i)f_1(\theta_v)) \ .
\end{align}
By following the same line of argument for computation of $D_{\textsc{\tiny KL}}(f_0(\theta_i)f_0(\theta_j,\theta_v)\|f_0(\theta_i,\theta_j,\theta_v))$ we obtain
\begin{align}\label{eq:procc2}
D_{\textsc{\tiny KL}} (f_1(\theta_i) & f_1(\theta_j)\| f_0(\theta_i,\theta_j)) \geq \nonumber \\ & D_{\textsc{\tiny KL}} (f_1(\theta_i)f_1(\theta_v)\|f_0(\theta_i,\theta_v)) \ .
\end{align}
From \eqref{eq:procc} and \eqref{eq:procc2} we can conclude that the divergence of two distributions is maximal between neighbor nodes, which concludes the proof.

\section{Proof of Theorem \ref{thm2}}
\label{App:thm2}

In order to prove this theorem we first assume that the size of the network grows to infinity and for any sequence of buses $\psi_t$ and any sequence of measurements from those buses $\bt_t$,
\begin{align}
\label{eq:cc1}&\frac{1}{N}\log\frac{f_0(\bt_N;\psi_N)}{\prod_{i\in\psi_N} f_1(\theta_i)}\rightarrow I_0\ ,\quad\text{under }R_0\ ,\\
\label{eq:cc2}\text{and }\quad &\frac{1}{N}\log\frac{\prod_{i\in\psi_N} f_1(\theta_i)}{f_0(\bt_N;\psi_N)}\rightarrow I_1\ ,\quad\text{under }R_1\ ,
\end{align} 
converge completely as ${N\rightarrow\infty}$. We define 
\begin{align*}
\alpha_0&=\mathbb{P}(\delta=0|\T=1)\ ,\\
\text{and}\quad\alpha_1&=\mathbb{P}(\delta=1|\T=0)\ , 
\end{align*}
and show that for the stopping time of the optimal strategy we have 
\begin{align}\label{eq:E0}
\lim_{\mathsf{P_e}\rightarrow0}\inf_{\tau,\psi_{\tau}} \ \bbe_0\{\tau\}
&\geq\frac{|\log\alpha_0|}{I_0}\ ,\\
\label{eq:E1}
\mbox{and}\ ,\quad\lim_{\mathsf{P_e}\rightarrow0}\inf_{\tau,\psi_{\tau}} \ \bbe_1\{\tau\}
&\geq\frac{|\log\alpha_1|}{I_1}\ ,
\end{align}
and then we prove that the selection rule designed in~\eqref{eq:op} and~\eqref{eq:op2} achieves these lower bounds. If we show that for all $0<\rho<1$
\begin{align}\label{eq:p0}
&\lim_{\mathsf{P_e}\rightarrow0}\inf_{\tau,\psi_{\tau}} \ \mathbb{P}_0\left(\tau>\rho\frac{|\log\alpha_0|}{I_0}\right)=1\ ,\\
\label{eq:p1}
\mbox{and}\ ,\quad &\lim_{\mathsf{P_e}\rightarrow0}\inf_{\tau,\psi_{\tau}} \ \mathbb{P}_1\left(\tau
>\rho\frac{|\log\alpha_1|}{I_1}\right)=1\ ,
\end{align}
then by applying the generalized Chebyshev inequality, we obtain
\begin{align}
\mathbb{E}_0\left\{\frac{\tau}{\frac{|\log\alpha_0|}{I_0}}\right\} & \geq\rho\cdot\mathbb{P}_0\left(\frac{\tau}{\frac{|\log\alpha_1|}{I_0}}>\rho\right)\overset{\eqref{eq:p0}}{\geq}1\ ,\quad \forall \rho>0 
\end{align}
which concludes \eqref{eq:E0}. By following the same line of arguments \eqref{eq:E1} will be proved.

Now, we prove \eqref{eq:p1} and the procedure for \eqref{eq:p0} will follow the same line of thought. Let us define the event
\begin{align}
\A(i,J)\dff\{\delta=i\, ,\, \tau\leq J\} \ ,
\end{align}
and the log-likelihood ratio
\begin{equation}\label{LLR}
\Lambda_t \;\dff\; \log\frac{\prod_{s=1}^t f_1(\theta_s;\psi(s))}{f_0(\bt_t;\psi_t)}\ .
\end{equation}
Then, by Wald's identity and for any $0<J<n$ and $B>0$ we have
\begin{align}
\alpha_1 & =\mathbb{P}_0(\delta=1) \nonumber\\
& = \mathbb{E}_0\{\mathds{1}_{(\delta=1)}\}\nonumber\\
& = \mathbb{E}_1\{\mathds{1}_{(\delta=1)}\exp(-\Lambda_{\tau})\}\nonumber\\
& \geq \mathbb{E}_1\{\mathds{1}_{(\A(1,J),\Lambda_{\tau}<B)}\exp(-\Lambda_{\tau})\}\nonumber\\
& \geq e^{-B}\mathbb{P}_1(\A(1,J),\Lambda_{\tau}<B)\nonumber\\
& \geq e^{-B}\mathbb{P}_1\Big(\A(1,J),\sup_{t<J}\Lambda_t<B\Big)\nonumber\\
& \overset{(a)}{\geq}e^{-B}\Big(\mathbb{P}_1(\A(1,J))-\mathbb{P}_1\big(\sup_{t<J}\Lambda_t\geq B\big)\Big)\nonumber\\
&\overset{(b)}{\geq}e^{-B}\Big(\mathbb{P}_1(\delta=1)-\mathbb{P}_1(\tau>J)-\mathbb{P}_1\big(\sup_{t<J}\Lambda_t\geq B\big)\Big)\nonumber \ ,
\end{align}
where (a) and (b) hold due to the properties of set difference operation. Now we have
\begin{align}\label{eq:p1L}
\mathbb{P}_1(\tau>J) & \geq \mathbb{P}_1(\delta=1)-e^B \mathbb{P}_0(\delta=1)-\mathbb{P}_1\big(\sup_{t<J}\Lambda_t\geq B\big) \nonumber\\
& = 1-\alpha_0-e^B\alpha_1-\mathbb{P}_1\big(\sup_{t<J}\Lambda_t\geq B\big)\ .
\end{align}
Since \eqref{eq:p1L} holds for any $B>0$, we set $B=cJI_1$ for some $c>1$. Then for any $1<K<J$ we obtain 
\begin{align}
\mathbb{P}_1&\Big(\sup_{t<J}\Lambda_t\geq B\Big)\nonumber\\
&=\mathbb{P}_1\Big(\sup_{t<J}\Lambda_t\geq cJI_1\Big)\nonumber\\
&\leq\mathbb{P}_1\Big(\sup_{t<K}\Lambda_t+\sup_{K<t<J}\Lambda_t\geq cJI_1\Big)\nonumber\\
&\leq\mathbb{P}_1\Big(\sup_{t<K}\Lambda_t+\sup_{K<t<J}\big(\frac{J}{t}\Lambda_t\big)-JI_1\geq (c-1)JI_1\Big)\nonumber\\
&\leq\mathbb{P}_1\Big(\frac{1}{J}\sup_{t<K}\Lambda_t+\sup_{K<t<J}\big(\frac{\Lambda_t}{t}-I_1\big)\geq (c-1)I_1\Big)\nonumber\\
&\leq\mathbb{P}_1\Big(\frac{1}{J}\sup_{t<K}\Lambda_t+\sup_{t>K}\big|\frac{\Lambda_t}{t}-I_1\big|\geq (c-1)I_1\Big)\ .
\end{align}
According to \eqref{eq:cc2}, for any $\epsilon>0$ there exists a $\hat{K}(\epsilon)$ such that
\begin{align}
\mathbb{P}_1\Big(\big|\frac{\Lambda_{t}}{t}-I_1\big|\leq\epsilon\Big)=1 \ ,\quad \forall t>\hat{K}(\epsilon) \ .
\end{align}
Hence, we have
\begin{align}\label{ineq}
\mathbb{P}_1&\Big(\sup_{t<J}\Lambda_t\geq cJI_1\Big)\leq\mathbb{P}_1\Big(\frac{1}{J}\sup_{t<\hat{K}(\epsilon)}\Lambda_t\geq (c-1)I_1-\epsilon\Big)\ .
\end{align}
Since, this is true for any $J<n$ and $c>1$, we assume the case that $n,J\rightarrow\infty$ and $c>1+\frac{\epsilon}{I_1}$. In this setting the right hand side of \eqref{ineq} approaches zero which indicates that for every $c>1$
\begin{align}\label{zero}
\lim_{L\rightarrow\infty}\mathbb{P}_1&\Big(\sup_{t<J}\Lambda_t\geq cJI_1\Big)=0\ .
\end{align}
Next, for any $0<\rho<\frac{1}{c}$ and  by defining
\begin{align}
J_{\alpha_1}\dff\rho\frac{|\log\alpha_1|}{I_1}\ ,
\end{align}
and setting $J=J_{\alpha_1}$ we obtain
\begin{align}\label{ineq2}
\mathbb{P}_1\bigg(\tau>\rho&\frac{|\log\alpha_1|}{I_1}\bigg) \geq \nonumber \\
& 1-\alpha_0-\alpha_1^{1-\rho c}-\mathbb{P}_1\big(\sup_{t<J_{\alpha_1}}\Lambda_t\geq cI_1J_{\alpha_1}\big)\ .
\end{align}
Now, by combining \eqref{zero} and \eqref{ineq2}, and for the setting in which $\alpha_1$ and $\alpha_0$ approach zero we obtain
\begin{align}\label{ww}
\mathbb{P}_1\left(\tau>\rho\frac{|\log\alpha_1|}{I_1}\right) = 1
\end{align}
Since \eqref{ww} holds regardless of the sampling procedure and stopping rule and only depends on the error performance of the strategy, it is valid for any strategy with the same decision quality, i.e.,
\begin{align}\label{ww2}
\lim_{\alpha_1,\alpha_0\rightarrow0}\inf_{\tau,\psi^{\tau}}\mathbb{P}_1\left(\tau>\rho\frac{|\log\alpha_1|}{I_1}\right) = 1
\end{align}
By following the same line of arguments for $\alpha_0$ and the average delay under $\H_0$,  \eqref{eq:p0} can be proved.

Now, we only require to show that the sequential strategy of this paper achieve the lower bounds on delay given in \eqref{eq:E0} and \eqref{eq:E1}. To this end, first we leverage the properties of complete convergence in \eqref{eq:cc1} and \eqref{eq:cc2}. Specifically, by defining 
\begin{align}\label{T0}
T_0(h)&\dff\sup\ \Big\{t\,:\,\Big|\frac{-\Lambda_t}{tI_0}-1\Big|>h\Big\} \ ,\\
\label{T1}
\mbox{and}\ ,\quad T_1(h)&\dff\sup\ \Big\{t\,:\,\Big|\frac{\Lambda_t}{tI_1}-1\Big|>h\Big\} \ ,
\end{align}
according to the complete convergence of \eqref{eq:cc1} and \eqref{eq:cc2} we have 
\begin{align}
\mathbb{E}_0\{T_0(h)\}<\infty\ , \quad \forall h>0\ ,\\
\mbox{and}\ ,\quad \mathbb{E}_1\{T_1(h)\}<\infty\ , \quad \forall h>0\ .
\end{align}
According to the definition of stopping time 
\begin{align}\label{ineq11}
\Lambda_{\tau-1}<\gamma_U \ .
\end{align}
Also, from \eqref{T1} and when $\tau>T_1(h)+1$ we have
\begin{align}\label{ineq22}
\Lambda_{\tau-1} > (\tau-1)(1-h)I_1 \ .
\end{align}
By combining inequalities in \eqref{ineq11} and \eqref{ineq22} we obtain
\begin{align}
\tau & < 1+\mathds{1}_{(\tau>T_1(h)+1)}\frac{\gamma_U}{I_1(1-h)}\nonumber\\
& \leq 1+\mathds{1}_{(\tau>T_1(h)+1)}\frac{\gamma_U}{I_1(1-h)}+\mathds{1}_{(\tau\leq T_1(h)+1)}T_1(h)\nonumber\\
& \leq 1+\frac{\gamma_U}{I_1(1-h)}+T_1(h)\ .
\end{align}
Therefore, by applying \eqref{T1} we can conclude that
\begin{align}
\mathbb{E}_1\{\tau\}\leq \frac{\gamma_U}{I_1}(1+o(1)) \ ,
\end{align}
and by replacing $\gamma_U$ from 
\begin{align}
\gamma_U = -\log\alpha_1 
\end{align}
we have
\begin{align}
\mathbb{E}_1\{\tau\}\leq \frac{|\log\alpha_1|}{I_1}(1+o(1)) \ .
\end{align}
By following the same line of argument for $\Lambda_{\tau-1}>\gamma_L$ we can derive
\begin{align}
\mathbb{E}_0\{\tau\}\leq \frac{|\log\alpha_0|}{I_0}(1+o(1)) \ ,
\end{align}
which concludes the proof.

\bibliographystyle{IEEEtran}
\bibliography{QD2}

\end{document}